\newtheorem{example}{Example}
\newtheorem{theorem}{Theorem}
\newtheorem*{proof*}{Proof}
\newtheorem{proposition}{Proposition}
\newtheorem{lemma}{Lemma}
\begin{document}
\title{Electrical Vehicle Charging Station Profit Maximization: Admission, Pricing, and Online Scheduling}

\author{\IEEEauthorblockN{Shuoyao Wang}, \and \IEEEauthorblockN{Suzhi Bi, }~\IEEEmembership{Member,~IEEE, } \and \IEEEauthorblockN{Ying Jun (Angela) Zhang, }~\IEEEmembership{Senior Member,~IEEE, }\and  \IEEEauthorblockN{and Jianwei Huang}, \IEEEmembership{Fellow,~IEEE}
\thanks{This work was supported in part by the National Basic Research Program (973 program Program number 2013CB336701), and three grants from the Research Grants Council of Hong Kong under General Research Funding (Project number 2150828 and 2150876) and Theme-Based Research Scheme (Project number T23-407/13-N). }

\thanks{S.~Wang, Y.J.~Zhang, and J.W.~Huang are with the Department of Information Engineering, The Chinese University of Hong Kong (E-mail: \{ws013, yjzhang, jwhuang\}@ie.cuhk.edu.hk@ie.cuhk.edu.hk). Y.J.~Zhang  is also with Shenzhen Research Institute, The Chinese University of Hong Kong.}

\thanks{S.~Bi is with the College of Information Engineering, Shenzhen University, Shenzhen, Guangdong, China (E-mail: bsz@szu.edu.cn).}
\vspace{-1cm}}


\maketitle

\begin{abstract}
The rapid emergence of electric vehicles (EVs) demands an advanced infrastructure of publicly accessible charging stations that provide efficient charging services. 
In this paper, we propose a new charging station operation mechanism, the JoAP, which jointly optimizes the EV admission control, pricing, and charging scheduling to maximize the charging station's profit. More specifically, by introducing a tandem queueing network model, we analytically characterize the average charging station profit as a function of the admission control and pricing policies. Based on the analysis, we characterize the optimal JoAP algorithm.  Through extensive simulations, we demonstrate that the proposed JoAP algorithm on average can achieve 330\% and 531\% higher profit than a widely adopted benchmark method under two representative waiting-time penalty rates.      
\end{abstract}

\IEEEpeerreviewmaketitle
\section{Introduction}
%
%
%
%
Environmental awareness and the rising fuel cost have stimulated an increasing interest in electrical vehicles (EVs). Establishing a conveniently available public charging infrastructure is essential  to ensure a large market penetration of EVs \cite{7390308}. Currently, however, the operation of  charging infrastructure is often not very profitable  due to the low expected revenues, high capital expenditures, and high operating and maintenance costs \cite{schroeder2012economics}. 
   
In light of this, several recent studies focused on improving the operation efficiency of EV charging stations (e.g., \cite{you2014efficient}\nocite{7500060}\nocite{6887361}\nocite{ghavami2014nonlinear}-\cite{7352372})  by carefully designing the charging scheduling and pricing mechanisms. In particular, You and Yang in \cite{you2014efficient} characterized  an optimal offline charging scheduling scheme, where ``offline" means that the scheduling decision relies on the noncausal information of future EV charging profiles. Tang and Zhang in \cite{7500060} relaxed  the assumption of noncausal information  by utilizing only the statistical distributions, instead of the exact realizations, of future EV charging profiles. In \cite{6887361}, Tang \emph{et al.} designed an online  charging scheduling algorithm that does not require any future information, not even the distribution information.  Ghavami and Kar in \cite{ghavami2014nonlinear} and Yuan \emph{et al.} in \cite{7352372} further proposed charging scheduling and pricing schemes to incentivize EV users to achieve  social optimality (i.e., minimizing the network-wide charging cost or maximizing the total economic surplus). In brief, various pricing schemes have also been proposed to maximize the charging station's profit through time-scale decomposition, peak valley decomposition and Lagrangian relaxation, and dual decomposition \cite{you2014efficient}-\cite{7352372}. 

Most existing studies, e.g, \cite{you2014efficient}-\cite{ghavami2014nonlinear}, assumed that a charging station has unlimited charging power to accommodate an  infinite number of EVs simultaneously. In practice, however, the total charging power is bounded due to the physical and security constraints of the distribution network. Moreover, the number of EVs that a charging station can accommodate is limited by the hardware and space constraints.
As such, the charging waiting time (defined as the time between the arrival time of the EV to the charging station and the time that the EV starts to receive service) is often unavoidable, which negatively impacts the users' experience.   
Hence it is necessary to implement an effective admission control policy to reduce the impact of the excessive charging waiting time  due to random EV arrivals.  

A  commonly-adopted admission control is the  queue-length based admission (QBA)  policy, where a newly arrived EV is admitted as long as the number of EVs waiting to be served at  the station is below  a specific threshold (e.g., the waiting room in the charging station). However, such a policy  performs poorly in many cases, as illustrated in Section~V. 
In contrast, Wei \emph{et al.}~in  \cite{7504120} proposed an admission control scheme, where the admission decision is based on the charging demands of EVs that have already arrived. The unknown future charging demands, however, were not considered in  \cite{7504120}, resulting in poor profit performance in practical scenarios (see Section V for related examples).


In this paper, we propose a novel EV charging station operating mechanism that  jointly optimizes pricing, charging scheduling, and admission control. The proposed algorithm, referred to as JoAP (joint admission control and pricing),  maximizes the average profit of a charging station. Here the profit corresponds to the difference between the revenue and a penalty proportional to the average charging waiting time. 
The waiting time penalty reflects the EV owners' impatience of waiting in the queue for an excessively long time, which  undermines the reputation of the charging station and reduces his long-term profit.
In the JoAP algorithm, each EV user maximizes his surplus by adjusting his charging demand in response to the charging price and the charging station maximizes his profit by choosing the proper admission control, scheduling and pricing  policies. 

The contributions of this paper are summarized  as follows: 
\begin{enumerate}
 \item  \emph{Admission control, scheduling, and pricing scheme:} To the best of our knowledge, this is the first paper that jointly optimizes pricing, scheduling, and admission control of an EV charging station. In particular,  we propose a novel multi-sub-process based admission control scheme, which allows us to flexibly tradeoff  between the revenue of the charging station and the waiting time of the EVs.

 \item \emph{Tandem queueing model:} We propose a tandem queueing model to analytically capture  the performance  of the proposed JoAP algorithm. More specifically, we obtain closed-form expressions of the average waiting time and admission probability as functions of the chosen algorithm  parameters.    

\item \emph{Optimization of algorithm parameters:} Based on the analysis of the tandem queue, we propose a low-complexity algorithm to compute the close-to-optimal parameters of the JoAP algorithm. Our simulations show that JoAP algorithm on average can achieve 330\% and 531\% higher profit than a widely adopted benchmark method under two representative waiting-time penalty rates.
\end{enumerate}

The rest of this paper is organized as follows. In Section II, we introduce the system model and formulate the problem. In Section III, we analyze the impact of the admission control policy on the admission probability and the average waiting time. In Section IV, we propose an efficient algorithm to simultaneously  maximize the charging station's profit and individual EV user's payoff surplus. Simulation results are presented in Section V. Finally, we conclude this paper in Section VI.
\section{System model}
\subsection{Charging Station Operation}
\vspace{-0.5cm}\begin{figure}[htp]   
	\includegraphics[width=0.45\textwidth]{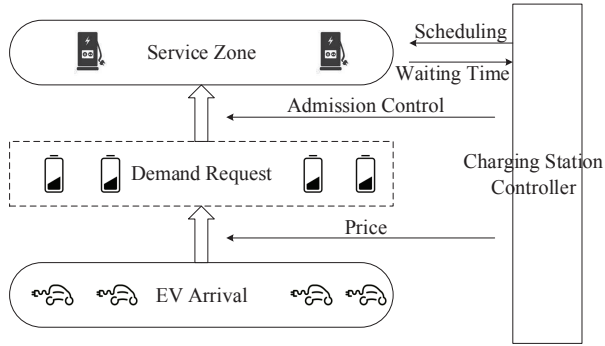}
	\caption{The proposed charging station interaction system \normalsize}
	\label{fig:system}
\end{figure}

We consider a charging station with $m$ charging ports and a sufficiently large number of parking lots (i.e., much larger than $m$), as shown in Fig.~\ref{fig:system}. In this case,  although a large number of EVs can be admitted to  the charging station, at most $m$ of them can be charged (served)simultaneously because of the physical constraints of the power distribution network and safety concerns.   The charging ports are connected to the parking lots through a switch scheduler, which  allows real-time communications and controls   between a particular charging port and a scheduled EV. For the simplicity of analysis, we assume that the cost of connecting  EVs with charging ports is negligible. All charging ports operate  with the same fixed charging power $\alpha$.

The charging station announces  a charging price of $r$ per unit energy to all arriving EVs. An EV $i$'s payment  payment to the charging station is the product of $r$ and the EV's demand $d_i$.  A long waiting time negatively affects the EV users' experience, which may lead to customer churn  in the long run. Thus, the charging station aims to determine the optimal pricing and admission control policy to maximize his average profit, which is the revenue minus the penalty due to EVs' waiting. 

EVs arrive at the charging station according to a Poisson random process \cite{7504120}, and each EV expects the charging station to fulfill his demand as soon as possible.  When an EV $i$ arrives, it attempts to maximize his  surplus by choosing his charging demand $d_i$ according to the charging price $r$. Based on the requested demand $d_i$,  the charging station decides whether to admit the EV.   The charging station will optimize his  an admission control policy to avoid excessive delay of admitted EVs. Once admitted, the EVs are charged on a first come first serve (FIFO) basis to fulfill their charging demands. It has been shown in \cite{erickson2010tardiness} that, when all EVs are homogeneous,  the FIFO  policy is equivalent to the shortest job first policy, and therefore is optimal in terms of minimizing the average waiting time. 
   
\subsection{Optimization from EVs' Perspective }
For simplicity, we consider homogeneous EVs \cite{5717547}. More specifically, all EVs have the same battery capacity $\varphi$ and the same utility function. Without loss of generality, we use the utility function $U(d)$ proposed in \cite{khan2011user} as an example to conduct the formulations. Notice that all analytical results still hold for any general increasing concave utility function.  Consequently, $U(\varphi)$ is the maximum utility that an EV can receive. An EV $i$ determines his charging demand to maximize his consumer surplus (i.e., utility minus payment),

\small\vspace{-0.5cm}\begin{subequations} 
\begin{align} 
  \underset{d_i}{\text{max}} &~~ U(d_i) -rd_i\\
\text{s.t.} &~~ 0\leq d_i \leq \varphi. 
\end{align}
\end{subequations}\normalsize
In particular, we consider the following concave utility function \cite{khan2011user},  where $\beta$ is the elasticity parameter, 
 
 \small\vspace{-0.1cm}\begin{equation} 
U(d)=U(\varphi)\frac{1-e^{-\beta d}}{1-e^{-\beta \varphi}}, \quad \forall 0\leq d \leq \varphi.
\end{equation}
\normalsize
As Problem (1) is a concave maximization problem, we can compute the optimal demand $d^*$ as a function of the service price $r$ as follows,

\small\begin{equation}
d^{*}(r) = \begin{cases} -\frac{\ln(\frac{1-e^{-\beta\varphi}}{U(\varphi)\beta} r)}{\beta}, &\mbox{if } r \leq \frac{U(\varphi)\beta}{1-e^-\beta\varphi}, \\ 
0, & \mbox{otherwise}. \end{cases} 
\end{equation}\normalsize
We can show that, $d^{*}(r)$ is a decreasing function of the price $r$ announced by the charging station, and becomes 0 when $r$ is too high. Note that $d^*(r)$ is the same for all EVs, since the EVs are homogeneous.

In this paper, we assume that the charging station knows the homogeneous utility function (2). Accordingly, the station can predict EV's demand  $d^{*}(r)$ in response to the price $r$ as in (3). As the user demand $d^{*}(r)$ has a one-to-one correspondence with  price $r$. Thus, optimizing $r$ is equivalent to optimizing $d$ in the rest of the paper.  

\subsection{Optimization from  Charging Station's Perspective}
 
Let  $\mathcal{V}$ denote the set of all EVs that arrive at the parking station during the time period of interest (e.g., 4 hours in our simulations). For each EV  $i \in \mathcal{V}$, the charging station makes a binary admission decision $x_i^{\mathcal{V}}(\pi_n,d)$, where $x_i^{\mathcal{V}}(\pi_n,d)=1$ if EV $i$ is admitted, and $x_i^{\mathcal{V}}(\pi_n,d)=0$ otherwise.  Here, $\pi_n$ denotes an admission policy, which will be detailed in Section III.A.  Consequently, the average admission probability is  $P_{\pi_n}(d)=\mathrm{E}_{\mathcal{V}}\left[\frac{1}{|\mathcal{V}|}\sum_{i \in \mathcal{V}}x_i^{\mathcal{V}}(\pi_n,d) \right]$, where $|\mathcal{V}|$ denotes the the cardinality of $\mathcal{V}$. Moreover, the average waiting time achieved under a policy   $\pi_n$ is a function of the demand $d$ and the EV arrival process $\mathcal{V}$,  denoted as  $\omega _{\pi_n} (\mathcal{V},d)$. Accordingly, the waiting time averaged over all the possible EV arrivals is denoted by $\omega_{\pi_n}(d)  \triangleq E_{\mathcal{V}}[\omega_{\pi_n}(\mathcal{V},d)]$.

By satisfying an EV's  charging demand $d$, the charging station receives a payment of $rd$, and pays an  electricity cost of $p_ed$ to the utility company, where $p_e$ is the  electricity price. The penalty related to the average waiting time is denoted by $h(\omega_{\pi_n}(d))$, where $h(\omega)$ is a general non-decreasing convex function of $\omega$ \cite{li2011delay}.   
Based on this, we formulate the charging station’s profit-maximization problem as in Problem (4).\footnote{Problem (4) doesn't consider the penalty of denying the EVs. However, we can consider this by simply adding a linear term of $P_{\pi_n}(d)$ to the objective function. Doing so does not affect the structure of the problem, and our analysis will remain unchanged. For simplicity of exposition, this linear term is omitted for the time being.}

\small\vspace{-0.5cm} 
\begin{subequations}
\begin{align}
\underset{\pi_n,d}{\text{max}} &~~ P_{\pi_n}(d) \left(r-p_e\right) d-h \left(\omega_{\pi_n}(d)\right) \\
\text{s.t.} &~~ d \geq 0, \quad i \in \mathcal{V}, \\
&~~ \pi_n \in \Pi, \\
&~~ d= -\frac{1}{\beta}\ln\left(\frac{1-e^{-\beta\varphi}}{U(\varphi)\beta} r\right),
\end{align}
\end{subequations}\normalsize
where the feasible set $\Pi$ will be introduced in Section III.A. The detailed expressions of  $P_{\pi_n}(d)$ and $\omega_{\pi_n}(d)$ will be given  in Section III.B and Section III.C, respectively.

\section{Multi-Sub-Process Admission and Queueing Analysis}
In this section, we first propose a multi-sub-process admission control scheme. Then, we present a tandem queueing model to analyze the impact of admission control policy and pricing  decision on $P_{\pi_n}(d)$ and $\omega_{\pi_n}(d)$. 
\subsection{Admission Control and Queueing Model}
 The objective of  admission control is to admit a large number of users with a guaranteed QoS. Let us first consider an extreme case of complete arrival process regulation, i.e., the inter-arrival time of two successively admitted EVs is always larger than a predefined threshold as the result of the admission control. If such a threshold is large enough, then the waiting time of every admitted EV will be  zero \cite{erickson2010tardiness}. However, under this overly conservative admission control policy, the charging station utilization can be very low, hence not achieving the maximum profit. To achieve a good balance among the waiting time, admission rate, and server utilization, we propose a multi-sub-process admission control scheme consisting of $n$ sub-processes. In particular, the inter-arrival time of two consecutively admitted EVs of the same sub-process must be larger than a threshold, denoted as $T_v$. An EV is admitted as long as it can fit in one of the sub-processes.   With some  abuse of notations, we use $\pi_n$ to denote the proposed admission control policy involving  $n$ sub-processes. Hence, the feasible set of all admission control policies considered in this paper is $\Pi=\{\pi_n | n \in \mathcal{N}^{+}\}$.
\begin{figure}[htp]    
	 \vspace{-0.3cm}\includegraphics[width=0.40\textwidth]{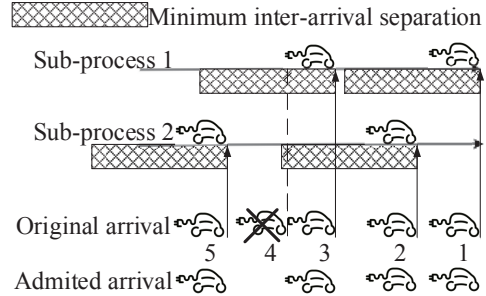}
	\caption{\small Admission control example illustrated in Example~1 \normalsize}
	\label{fig:example}
\end{figure}
 \vspace{-0.4cm}\begin{example}
Consider a $\pi_2$-admission policy that consists of two sub-processes, both having the same  minimum inter-arrival time  $T_v$, as shown in Fig. 2. When EV $1$ arrives, we assign it to sub-process $1$. When EV $2$ arrives, we cannot assign to  sub-process $1$ as the inter-arrival time between EV 1 and EV 2 is shorter than $T_v$ (the length of shadowed rectangle). Hence, we assign EV 2 to  sub-process $2$. For EV $3$, we can assign it  to sub-process $1$. However, when EV $4$ arrives, both sub-processes are \lq\lq occupied\rq\rq. Therefore, EV $4$ has to be rejected.  When EV $5$ arrives, sub-process $2$ becomes available again  (due to the large enough inter-arrival time between EV 2 and EV 5). Hence, we  accept EV $5$ and assign it to sub-process 2. 
\end{example}
We would like to emphasize that the multi-sub-process scheme can represent a wide range of admission control policies. On one hand, the admitted traffic is completely regulated if there is only one sub-process, i.e., $n=1$. On the other hand, when $n$ approaches infinity, all EVs will be admitted regardless of the underlying distribution of the arrival process. Thus, choosing proper values of $n$ and $T_v $ allows us to balance the trade-off between the waiting time and admission rate, and eventually maximizes the charging station profit.

The admission process governed by the multi-sub-process scheme can be modeled as a virtual queueing system with zero buffer and $n$ servers, as represented by  Q.1 (an $M$/$T_v$/$n$/$n$ queue\footnote{We can represent a single queue using Kendall's notation in the form $A$/$S$/$C$/$C$+$K$, where $A$ describes the inter-arrival times, $S$ describes the service time, $C$ describes the number of servers, and $C$+$K$ describes the number of spaces in the system. When the $K$ parameter is not specified (e.g. $M$/$M$/$1$ queue), it is assumed that  $K =\infty$.    In Kendall's notation: $M$ stands for Markov or memoryless process, $D$ stands for deterministic process, $G$ stands for general and corresponds to  an arbitrary probability distribution, $Ph$ stands for phase-type process (the process that constructed by a convolution or mixture of exponential process), and $\cdot$ stands for any process.}) in Fig.~3. Each virtual server corresponds to a sub-process, which  has a deterministic service time $T_v$. The arrival of Q.1 is the EV arrival process $\mathcal{V}$. As the buffer is zero for Q.1,  an EV will be declined for service if it finds all virtual servers are busy (i.e., all sub-processes are occupied) upon arrival.   Otherwise, the EV is admitted and will occupy an idle virtual server for a fixed time period of $T_v$. The departure from Q.1 means that the EV is  admitted to the charging station.

 Once EVs are admitted, they are served in the charging station according to the FIFO policy. We model  the queueing system in the charging station as Q.2 in Fig. 3, where the $m$ charging ports represent $m$ servers, each with a deterministic service time $d/\alpha$ , where $d$ and $\alpha$ are the charging demand per EV and the fixed common charging rate per charging port, respectively. Note that the departure process  of Q.1 is the arrival process  of Q.2. To ensure the stability of  Q.2, the  inter-departure time of Q.1 must be greater than the average service time of Q.2, i.e., $nT_v > md / \alpha$. We can equivalently  represent this constraint as  $n T_v =\tau m d / \alpha$, where $\tau > 1$.To sum it up, the determination of an admission control policy involves two decision variables,  $\tau$ and $n$, with which can compute  $T_v =\frac{\tau m d}{n\alpha}$.  In the following, we will consider optimize  $n$ under a fixed value of $\tau$.  Without loss of generality, we assume that $\alpha$ equals $1$. We will examine the impact of $\tau$ in Section~V.

 In practice,  a well regulated arrival process seldom yields a long queue length \cite{van2006performance}. Consequently,   we ignore the impact of buffer of Q.2 and assume that it is infinite in the following analysis.  We can going to propose a  JoAP algorithm  that optimize the performance of  an $\underbrace{M\text{/}T_v\text{/}n\text{/}n}_\text{Q.1}$\,+\,$\underbrace{\cdot\text{/}d\text{/}m}_\text{Q.2}$ tandem queueing network. 

Before concluding  this subsection, we would like to emphasize that Q.1 in Fig. 3 is a virtual queue that does not exist in reality. We consider Q.1 for the purpose of  analyzing the admission control policy. Queue Q.2 is a real queue corresponding to the service in the charging station. As such, the admission probability is the probability that a new arrival is admitted to Q.1, and the charging waiting time is the waiting time in Q.2. 
\begin{figure}[htp]   
\hspace{-0.1cm}\includegraphics[width=0.5\textwidth]{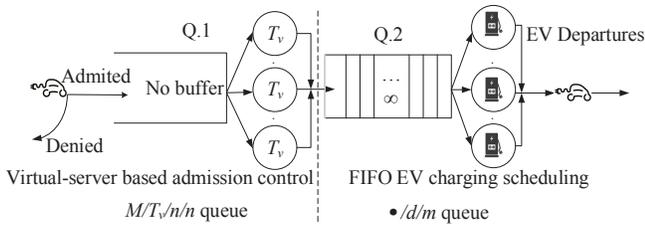}
	\caption{\small The tandem queueing network model \normalsize}
	\label{fig:queue}
\end{figure}

\subsection{Admission Probability}
Previous queueing literature (e.g., \cite{allen2014probability}) have numerically analyzed the performance of  $M$/$D$/$C$/$C$+$K$ queues (e.g., Q.1 in Fig.~3) without analytical characterization of the system performance. H.~Tijms in \cite{tijms2008note} showed  that a two-phase process server can be used to approximate a deterministic server with a marginal performance gap. 
Based on this approximate model, we derive a closed-form expression of steady-state probabilities of Q.1 in the following Lemma~1.
To the best of our acknowledgment, this paper is the first analytical study of the $M$/$D$/$C$/$C$+$K$ system with $K=0$ (i.e., zero buffer).

\begin{lemma}
Consider an $M$/$D$/$n$/$n$ queue with a Poisson process with a arrival rate $\lambda$, a deterministic service time $\tau md / n $, and zero buffer-size. The steady-state probability of state $i$ (i.e., the probability that the system has $i$ users  being served simultaneously) can be  calculated based on the two-phase-process approximation in \cite{tijms2008note} as follows,

\small\begin{equation}
P_i(n,d)=\frac{ (\frac{d\tau m\lambda }{n }) ^i}{i!\sum_{j=0}^n \frac{(\frac{d\tau m\lambda }{n }) ^j}{j!}}.
\end{equation}\normalsize

The admission probability of Q.1 is:
\small\begin{equation}
P_{\pi_n}(d) = 1-P_n(n,d)=1-\frac{  \left(\frac{ d\tau m\lambda}{n}\right)^n e^{-\frac{\tau m d \lambda}{n }}}{\Gamma \left(n+1,\frac{\tau m d \lambda}{n }\right)}.
\end{equation}
\end{lemma}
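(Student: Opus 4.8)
The plan is to establish the steady-state distribution in (5) by recognizing Q.1 as an Erlang loss system, and then to reduce the resulting blocking probability to the incomplete-gamma form in (6). The key structural fact I would exploit is the \emph{insensitivity} of loss systems: for an $M/G/n/n$ queue, the stationary distribution of the number of busy servers depends on the service-time distribution only through its mean. Since the deterministic service time here equals $\tau m d/n$, the offered load is $\rho \triangleq \lambda\tau m d/n = d\tau m\lambda/n$, and the classical Erlang-B formula then yields (5) directly.

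To make the derivation concrete in the spirit of \cite{tijms2008note}, I would first replace the deterministic server by a two-phase phase-type server with the same mean $\tau m d/n$, turning Q.1 into an $M/Ph/n/n$ system. A state is then a pair $(k_1,k_2)$ recording how many busy servers occupy each phase, subject to $k_1+k_2\le n$. I would write the balance equations of the associated continuous-time Markov chain and verify, by substitution, that the Erlang-type product form for the stationary probabilities satisfies them; summing over all states with $k_1+k_2=i$ then collapses the marginal of the total number of busy servers to (5). This is exactly where the insensitivity manifests itself: the individual phase rates cancel and only the mean service time survives.

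For the admission probability (6), I would invoke PASTA (Poisson Arrivals See Time Averages): a tagged arrival is blocked precisely when all $n$ virtual servers are busy, an event of stationary probability $P_n(n,d)$, so $P_{\pi_n}(d)=1-P_n(n,d)$. It then remains to rewrite the Erlang-B expression via the integer identity $\Gamma(n+1,\rho)=n!\,e^{-\rho}\sum_{j=0}^{n}\rho^j/j!$, equivalently $\sum_{j=0}^n \rho^j/j! = e^{\rho}\Gamma(n+1,\rho)/n!$. Substituting into $P_n=(\rho^n/n!)\big/\sum_{j=0}^n\rho^j/j!$ cancels the factor $n!$ and gives $P_n=\rho^n e^{-\rho}/\Gamma(n+1,\rho)$, which is (6) after setting $\rho=d\tau m\lambda/n$.

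The main obstacle is the middle step: rigorously showing that the multidimensional phase-type chain has a stationary marginal that collapses to the one-dimensional Erlang-B form. The cleanest route around the combinatorial bookkeeping is to cite the general insensitivity theorem for the $M/G/n/n$ loss system rather than solving the two-phase balance equations by hand; the two-phase model of \cite{tijms2008note} is then needed only to place the deterministic server within a tractable phase-type framework, and the final formula is guaranteed by insensitivity independently of how many phases are used.
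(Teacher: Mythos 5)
Your proof is correct, but it takes a genuinely different route from the paper's. The paper follows the two-phase-process approximation of Tijms literally: it replaces each deterministic server by a two-stage process with rate $\kappa=2/d_v$ and (improper) branching weights $r_1=-1$, $r_2=5/4$, encodes the system as a two-dimensional Markov chain on states $(s_1,s_2)$, writes out the generator matrix case by case, and proves the Erlang-type steady-state distribution by an induction on the number of servers $n$, relating $\mathbf{T}_{k+1}$ to $\mathbf{T}_k$ and deriving the recursion $P_{k+1}(k+1,d)\propto P_k(k+1,d)$. You instead observe that Q.1 is an $M/G/n/n$ Erlang loss system and invoke the classical insensitivity theorem (Sevastyanov): the stationary distribution of the number of busy servers depends on the service law only through its mean, so the Erlang-B product form holds \emph{exactly} for deterministic service with offered load $\rho=\lambda\tau m d/n$; PASTA then gives the blocking probability, and the identity $\Gamma(n+1,\rho)=n!\,e^{-\rho}\sum_{j=0}^{n}\rho^j/j!$ converts it to the incomplete-gamma form of (6). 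Your route is shorter, avoids the generator-matrix bookkeeping entirely, and in fact proves something stronger than the paper states: the formula is exact, not an artifact of the two-phase approximation (the paper's framing of (5) as an ``approximation,'' and its claim of novelty for $M/D/n/n$, understate a classical result). One caution on your fallback sketch: if you did verify the product form through Tijms's specific two-phase model rather than citing insensitivity, note that $r_1=-1$ makes it a matrix-exponential representation with a negative weight rather than a proper phase-type distribution, so the standard probabilistic balance-equation argument for $M/Ph/n/n$ would need care there; your primary route via the general insensitivity theorem sidesteps this cleanly.
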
\normalsize

We can prove Lemma~1 by induction, with the detailed proof in the on-line technical report \cite{website2013} due to the page limit. The validity of  Lemma~1 is  verified in Fig.~\ref{fig:adm},
\begin{figure}[htp]    
	\includegraphics[width=0.5\textwidth]{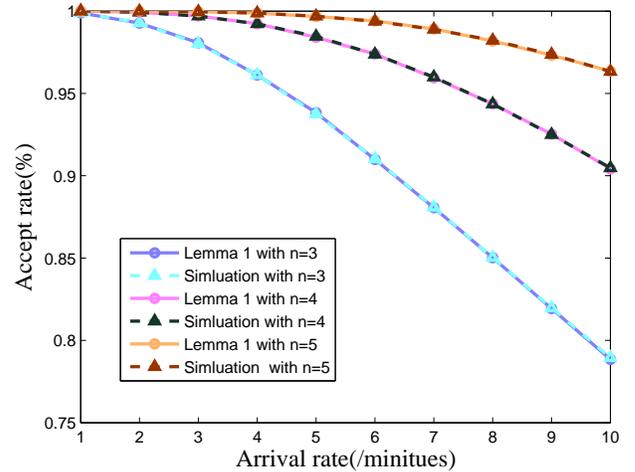}
	\caption{\small The comparison of the admission probability  between the simulation and the approximation in Lemma 1, with $\tau=1.01$, $m=4$, $\beta=0.05$, $\alpha=3.3kW$, $d=\varphi$, and $\gamma=35kWh$ \normalsize}
	\label{fig:adm}
\end{figure}
where we compare the admission rate derived in (6) with the simulation results (without any approximation). We choose the number of  servers in Q.1, $n$, to be  $3$, $4$, and  $5$, respectively. Each point corresponds to the average over 1000 time periods.  The maximum gap between the analysis and simulation is  0.01$\%$, which verifies the accuracy of the results in Lemma 1.

\subsection{Average Waiting Time}

\subsubsection{Admitted-arrival}
 To study the average waiting time in Q.2, we derive the PDF (probability density function) of the inter-arrival time of Q.2. 
\begin{lemma}
The PDF of the inter-arrival time of admitted arrivals of Q.2 is
\small\begin{equation}
f_{X}(x)=\begin{cases}
    \sum_{i=0}^n\frac{i}{T_v}\left(\frac{T_v-x}{T_v}\right)^{i-1} P_i(n,d),& \text{if } x\leq T_v,\\
    0,              & \text{otherwise.}
\end{cases}
\end{equation}
\end{lemma}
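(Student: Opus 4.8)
The plan is to condition on the state of the virtual queue Q.1 and exploit the conditional uniformity of the Poisson arrivals together with the deterministic holding time $T_v$. Recall that, by the modeling convention of this section, an arrival to Q.2 is exactly a departure from Q.1, i.e.\ the instant at which an occupied sub-process completes its $T_v$-long holding period and releases its EV to the charging station. Hence, to obtain the inter-arrival density $f_X$, I would start from a stationary epoch, where by Lemma~1 the number of occupied sub-processes equals $i$ with probability $P_i(n,d)$, compute the distribution of the time until the next such completion conditional on $i$, and finally mix these conditional densities against $P_i(n,d)$.

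The key step is the residual-time distribution. I would first establish that, conditional on $i$ sub-processes being occupied in steady state, their remaining holding times $R_1,\dots,R_i$ are i.i.d.\ Uniform$[0,T_v]$. This rests on the conditional-uniformity property of the Poisson process: given the number of currently occupied servers, their activation epochs are uniformly distributed over the preceding window of length $T_v$, so the ages are uniform and the residuals $R_j=T_v-\text{(age)}$ are i.i.d.\ Uniform$[0,T_v]$. For the finite loss system $M/D/n/n$ this is exactly the insensitivity / equilibrium-residual structure underlying the two-phase approximation already invoked in Lemma~1, so I would borrow that same approximation to make the statement precise.

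Given the residuals, the next arrival to Q.2 occurs at $\min\{R_1,\dots,R_i\}$. Since $\Pr\big(\min_j R_j>x\big)=\big(\tfrac{T_v-x}{T_v}\big)^{i}$ for $0\le x\le T_v$, differentiating yields the conditional density $\tfrac{i}{T_v}\big(\tfrac{T_v-x}{T_v}\big)^{i-1}$ supported on $[0,T_v]$. This simultaneously explains the bounded support (a minimum of $[0,T_v]$-uniform variables never exceeds $T_v$) and the vanishing $i=0$ contribution (the coefficient $i$ kills it, reflecting that an empty system has no imminent completion). Summing the conditional densities weighted by $P_i(n,d)$ then reproduces the claimed $f_X$.

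The \textbf{main obstacle} is the rigorous justification of the i.i.d.\ Uniform$[0,T_v]$ residuals for the \emph{finite-buffer} loss system rather than the $M/D/\infty$ system, where the Poisson conditional-uniformity argument is immediate; bridging this gap is precisely where the two-phase approximation of Lemma~1 carries the weight. A secondary point I would check is normalization: integrating the stated $f_X$ over $[0,T_v]$ returns $1-P_0(n,d)$, so the expression is exact only once the empty-state contribution (whose recurrence time exceeds $T_v$) is either absorbed or neglected under the well-utilized regime assumed here; I would confirm this is consistent with the way $f_X$ is subsequently used in the Q.2 waiting-time derivation.
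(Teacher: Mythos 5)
Your proposal is correct and follows essentially the same route as the paper: condition on the steady-state occupancy $i$ of Q.1 (with probability $P_i(n,d)$ from Lemma~1), use uniform residual holding times on $[0,T_v]$ so that the time to the next departure is the minimum of $i$ i.i.d.\ uniforms with survival probability $\left(\frac{T_v-x}{T_v}\right)^{i}$, and mix over $i$ — the paper merely assembles the mixture at the CDF level first and then differentiates, which is immaterial. Your two flagged caveats (the uniform-residual claim for the finite loss system, which the paper also handles only by citation, and the defective normalization $\int_0^{T_v} f_X = 1-P_0(n,d)$, which is equally present in the paper's own stated $F_X$) are accurate and do not separate your argument from theirs.
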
\normalsize   
\begin{proof}
Recall that the arrival process of Q.2 is the departure process of Q.1. According to \cite{sztrik2012basic}, the residual service time of a queueing system is the service time remaining to a job under service when the system is observed at any time. The residual service time of Q.1  follows a uniform distribution in $[0,T_v]$, as the arrival process is memory-less (Poisson) and the buffer size is zero \cite{tijms2006new}. When Q.1 is at a particular state $i$, the probability of no departure during the next period of  time of a length $x$ is equal to the probability that  the residual service times of all existing jobs are no-less than $x$, i.e.,  $\left(\left(T_v-x\right) / T_v\right)^i$. Consequently, the probability of the first departure time (after the observation time point) being no greater than $x$ is  $1-\left(\left(T_v-x\right) / T_v\right)^i$. Therefore, the CDF of the inter-departure time of Q.1 (i.e., the inter-arrival time of Q.2), denoted by  $X$, is,

\small\vspace{-0.1cm}\begin{equation}
F_{X}(x)=\begin{cases}
    \sum_{i=0}^n\left(1-\left(\frac{T_v-x}{T_v}\right)^i\right)P_i(n,d),& \text{if } x\leq T_v,\\
    0,              & \text{otherwise.}
\end{cases}
\end{equation}\normalsize

Taking the derivative of (8)  yields the PDF in Lemma~2.
\end{proof}

\subsubsection{Phase-type Approximation}
We now derive the average waiting time of Q.2 with the phase-type approximation. So far, there does not exist a general closed-form expression for the  waiting time distribution of a $GI$/$D$/$m$ queue (e.g., Q.2 in Fig.~3) \cite{tijms2003first}, where $GI$ means a general arrival process. To overcome this difficulty, \cite{tijms2003first} showed  that the waiting time distribution of a $GI$/$D$/$m$ queue is the same as that of a $GI^{(m^{*})}$/$D$/$1$  queue, where $GI^{(m^{*})}$  denotes a coordinated inter-arrival time  process that is distributed as the sum of $m$ inter-arrival times of a $GI$/$D$/$m$ queue. Let  $Y$ denote the coordinated inter-arrival time of the $GI^{(m^{*})}$/$D$/$1$. The mean and variance of $X$ and $Y$ are related by $\mu_Y =mE(X)$ and $\sigma_Y^2=mE(X)^2-m\left(\left(X\right)\right)^2$.

Furthermore, a $GI^{(m^{*})}$/$D$/$1$ queue can be approximated by a $Ph$/$D$/$1$  queue, where $Ph$ means the phase-type process \cite{tijms2003first}. One of the most widely used phase-type distribution is the mixture exponential distribution, which is defined as the mixture of two exponential distributions with means $1/\lambda_1 $and $1/\lambda_2$, and weights $\gamma$ and $1-\gamma$, respectively.  Specifically, the PDF is given by 

\small\vspace{-0.1cm}\begin{equation}
	f_{\text{Ph}}(x)=\gamma e^{-\lambda_1 x}+(1-\gamma) e^{-\lambda_2 x}.
\end{equation}\normalsize
In this paper, we replace the inter-arrival distribution of Q.2 with the mixture exponential distribution in (9). To ensure that the first and second moments of the mixture exponential distribution are equal to those of $Y$, we set $
\frac{1}{\lambda _1}+\frac{1}{\lambda_2} =2\mu_Y$, $\frac{1}{\lambda _1^2}+\frac{1}{\lambda _2^2} =\sigma_Y^2$, and $\gamma=\frac{1}{2}$. In this way, we can approximate the waiting time distribution of Q.2 by that of the $Ph$/$D$/$1$  queue.

 Let $\rho=\lambda P_{\pi_n}(d)d /\left(m\right)$ denote the load density admitted to the charging station.  We derive in the following Theorem 1  the approximated average waiting time of the charging station.

\begin{theorem}
 The approximated average waiting time at the charging station for the admitted EVs is

\small\vspace{-0.1cm}\begin{equation}
\omega_{\pi_n}(d) =\frac{\rho d}{2(1-\rho)}\left[d^2+2d\mu_Y+\sigma^2_Y\right].
\end{equation} \normalsize
Moreover, $\omega_{\pi_n}(d)$ is an  increasing convex function in $d$ for a fixed $n$.
\end{theorem}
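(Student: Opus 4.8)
The plan is to first derive the closed form and then treat the shape in $d$ separately.

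\emph{Deriving the expression.} I would build directly on the approximation chain assembled just before the statement. Queue Q.2 is a $GI/D/m$ queue, so by the reduction of \cite{tijms2003first} its waiting-time law coincides with that of the single-server queue $GI^{(m^{*})}/D/1$ whose inter-arrival time $Y$ has the matched moments $\mu_Y=mE(X)$ and $\sigma_Y^2=m\,\mathrm{Var}(X)$, with $E(X),E(X^2)$ read off from the PDF in Lemma~2. Replacing the coordinated arrival process by the mixture-exponential law (9) with $\gamma=\tfrac12$ and rates $\lambda_1,\lambda_2$ fixed by $\tfrac{1}{\lambda_1}+\tfrac{1}{\lambda_2}=2\mu_Y$ and $\tfrac{1}{\lambda_1^2}+\tfrac{1}{\lambda_2^2}=\sigma_Y^2$ turns Q.2 into an $H_2/D/1$ queue. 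For this queue I would write down the mean waiting time from the Lindley / Pollaczek--Khinchine transform specialized to hyperexponential arrivals and deterministic service time $d$, obtaining $E[W_q]$ as an explicit function of $(\gamma,\lambda_1,\lambda_2,d)$. Substituting the two moment-matching identities (which eliminate $\lambda_1,\lambda_2$ in favour of $\mu_Y,\sigma_Y^2$) and finally using $\rho=d/\mu_Y=\lambda P_{\pi_n}(d)d/m$ should collapse the expression to (10).

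\emph{Monotonicity and convexity in $d$.} Here the key observation is that \emph{every} $d$-dependence enters through the single scaled variable $z:=ad$ with $a:=\tau m\lambda/n$ held fixed (since $n$ is fixed). Indeed $T_v=\tau md/n=z/\lambda$ is linear in $d$, and by Lemma~1 the state probabilities $P_i(n,d)$ depend on $d$ only through $z$; hence so does the admission probability (6), and therefore $\rho=\rho(z)$. From Lemma~2, rescaling $x=T_v u$ shows $E(X)=T_v\,g_1(z)$ and $E(X^2)=T_v^2\,g_2(z)$ for functions $g_1,g_2$ of $z$ alone, so $\mu_Y=d\,M(z)$ and $\sigma_Y^2=d^2 S(z)$; in fact $M(z)=1/\rho(z)$ because $\mu_Y=d/\rho$ by definition. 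Consequently the bracket equals $d^2\,[\,1+2/\rho(z)+S(z)\,]$ and
\begin{equation*}
\omega_{\pi_n}(d)=d^3\,H(z),\qquad H(z):=\frac{\rho(z)+2+\rho(z)S(z)}{2\,(1-\rho(z))}.
\end{equation*}
Since $z=ad$ is a positive linear function of $d$, showing $\omega_{\pi_n}$ increasing and convex in $d$ is equivalent to showing $u\mapsto u^3H(u)$ increasing and convex on the stable range $\{u:0<\rho(u)<1\}$ --- a single-variable problem.

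\emph{Carrying out the one-variable analysis and the main obstacle.} For monotonicity I would argue that $\rho(z)=\tfrac{n}{\tau m^2}\,z\,P_{\pi_n}(z)$ is the carried load of the loss system Q.1, which is increasing in the offered load $z$; combined with $d^3$ increasing and $1/(1-\rho)$ increasing this gives $\omega_{\pi_n}'>0$. For convexity the cleanest tool is that a product of nonnegative, increasing, convex functions is again increasing and convex, which I would apply to $d^3$ and $1/(1-\rho(d))$; what then remains is to control the correction factor $\rho+2+\rho S$ and the curvature of $\rho(z)$. \textbf{This is where I expect the real work to be:} $\rho(z)$ and $S(z)$ are built from the regularized incomplete-gamma ratio of Lemma~1 and from the moment integrals $g_1,g_2$ of Lemma~2, so they are not polynomials, and the carried load is increasing but \emph{concave}, whence $\rho''\le 0$ and the convexity of $1/(1-\rho)$ is not automatic --- it reduces to the estimate $2(\rho')^2+(1-\rho)\rho''\ge 0$. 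I would discharge this by differentiating (6) directly, using the standard log-concavity / monotone-ratio properties of $\Gamma(n+1,\cdot)$ to bound $\rho'(z)$ and $\rho''(z)$, and by showing $S(z)$ stays bounded so that the dominant $d^3/(1-\rho)$ factor dictates the sign; should a fully analytic bound prove too delicate, the reduction to the single variable $z$ at least renders the claim verifiable by a one-dimensional check over the feasible range.
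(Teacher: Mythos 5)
Your plan has the right skeleton (the $GI/D/m \to GI^{(m^{*})}/D/1 \to Ph/D/1$ reduction with two-moment matching is exactly the paper's), and your scaling observation in the second part is genuinely nice. But there are two concrete gaps. First, the derivation of (10) as you describe it would fail: formula (10) is \emph{not} the exact mean waiting time of the $H_2/D/1$ queue, and it cannot be produced by ``the Lindley / Pollaczek--Khinchine transform specialized to hyperexponential arrivals.'' Pollaczek--Khinchine requires Poisson arrivals, and for deterministic service the exact $GI/D/1$ analysis involves the roots of a transcendental characteristic equation (the service-time transform $e^{-sd}$ is not rational), so no closed form of the shape (10) ``collapses'' out. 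The paper instead obtains (10) by invoking a ready-made two-moment \emph{approximation} formula (Proposition~1, quoted from \cite{tijms2003first}): it writes the mixture-exponential inter-arrival transform as a ratio of polynomials $a^{*}(s)=a_1(s)/a_2(s)$ with $a_2(s)=(s+\lambda_1)(s+\lambda_2)$, computes $\psi=\tfrac{1}{2}(\lambda_1+\lambda_2)/(\lambda_1\lambda_2)$, substitutes into the quoted approximation for the mean wait, and then eliminates $\lambda_1,\lambda_2$ via $\mu_Y,\sigma_Y^2$ to land on $\frac{\rho d}{2(1-\rho)}\left[d^2+2d\mu_Y+\sigma^2_Y\right]$. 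Unless you import precisely that approximation formula, the first half of the theorem is not derived; this matters because the theorem's statement is \emph{defined} by that approximation.

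Second, the monotonicity/convexity claim is left unproven by your own admission. Your reduction --- every $d$-dependence enters through $z=\tau m\lambda d/n$, hence $\omega_{\pi_n}(d)=d^3H(z)$ with $z$ linear in $d$ --- is correct (indeed $\mu_Y=d/\rho(z)$ and $\sigma_Y^2=d^2S(z)$ check out) and is structurally cleaner than the paper's treatment, but you then defer exactly the decisive inequalities: monotonicity of the carried load, the bound $2(\rho')^2+(1-\rho)\rho''\ge 0$, and control of $S(z)$; a ``one-dimensional check over the feasible range'' is numerics, not a proof. The paper closes this part by a different decomposition: using $\mu_Y=m/(\lambda P_{\pi_n}(d))$ it rewrites $d\left[d^2+2d\mu_Y+\sigma^2_Y\right]$ as a sum of terms of the form $d^3$, $d\sigma_Y^2$, and $d^2 m/(\lambda P_{\pi_n}(d))$, argues these are positive, increasing and convex in $d$ (relying on $P_{\pi_n}(d)$ being positive, decreasing and concave, the same facts used in Lemma~4, which make $1/P_{\pi_n}(d)$ increasing convex), shows by explicit differentiation that $\rho/(1-\rho)$ has a positive increasing derivative, and concludes via the product rule for nonnegative increasing convex functions. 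The paper's write-up is itself terse and asserts some of these monotonicity facts rather than proving them, but it commits to a complete chain; your proposal stops short precisely at the step you yourself flag as ``the real work,'' so as it stands it establishes neither half of the theorem.
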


The proof of  Theorem~1 can be find in the Appendix~B of the online technical report \cite{website2013}. Let us verify the approximation by comparing the average waiting time in (10) with  simulation results (without any approximation). In Fig.~\ref{fig:wt}, for each pair of arrival rate and individual demand, we simulate 1000 independent 1000-hour arrival processes $\mathcal{V}$ and plot the average admission rates.  The difference is no more than 0.1\%.  
\begin{figure}[htp]    
	\vspace{-0.5cm}\includegraphics[width=0.5\textwidth]{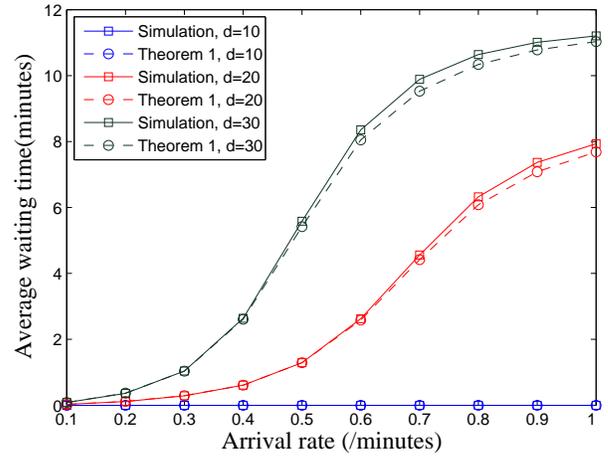}
	\caption{\small The comparison of the average waiting time between the simulation and the approximation in Lemma 1, with $m = 4$.  \normalsize}
	\label{fig:wt}
\end{figure}

\section{Optimization Problem Recasting and Profit Maximization}
\subsection{Optimization Problem Recasting}
With the tandem queueing analysis, we can rewrite (4) as

\small\vspace{-0.3cm}\begin{subequations}
\begin{align}
\underset{n,d}{\text{max}}&~~ s(n,d)=P_{\pi_n}(d)\left(\frac{d e^{-\beta d}}{\xi}-d p_e \right)-h\left(\omega_{\pi_n}\left(d\right)\right)\\
\text{s.t.}&~~ 0\leq d\leq \varphi, \quad n \in \mathcal{N}^{+}, 
\end{align}
\end{subequations}\normalsize
where $\xi=\frac{1-e^{-\beta\varphi}}{U(\varphi)\beta}$, $P_{\pi_n}(d)$ and $\omega_{\pi_n}\left(d\right)$ are given in (6) and (10), respectively. To solve the integer programming Problem (11) efficiently, we replace the decision variable $n$ with $P \triangleq P_{\pi_n}(d)$. This is because for a particular feasible  $(P,d)$, we can find a unique $n$ that satisfies equation (6) and the objective function (12a) is concave under the conditions in (12b). Accordingly,  (11) can be equivalently expressed as 

\small\vspace{-0.1cm}\begin{subequations}
\begin{align}
\underset{P,d}{\text{max}}&~~ \hat{s}(P,d)=P\left(\frac{d e^{-\beta d}}{\xi}-d p_e \right)-h\left(\omega_{\pi_n}\left(d\right)\right)\\
\text{s.t.}&~~ 0\leq d\leq \varphi, \quad P \in (0,1), \\
&~~ P \in \{P_{\pi_n}(d)|\forall n \in \mathcal{N}^{+}, \forall d \in [0,\varphi] \}.
\end{align}
\end{subequations}\normalsize

 Shaked and  Shanthikumar in \cite{shaked1988stochastic} showed that the average waiting time of a $GI$/$GI$/$1$ queue  with first-come-first-served order is jointly convex in the effective-arrival-rate and the service rate. The effective-arrival-rate of  the corresponding coordinated queue (a $GI^{(m^{*})}$/$D$/$1$ queue) of Q.2 is $\frac{P\lambda}{m}$. By the composition rule, we can see that $-h\left(\omega_{\pi_n}\left(d\right)\right)$ is jointly concave in $(\frac{P\lambda}{m},d)$ (thus in $(P,d)$). This, together with the fact that $P\left(\frac{d e^{-\beta d}}{\xi}-d p_e \right)$ is jointly concave in $(P,d)$, implies that (12a) is a jointly concave function in $(P,d)$. If we  ignore the integer constraint in (12c), then Problem (12) can be solved efficiently by the gradient method, with the optimal solution denoted as $(P^{v},d^{v})$. Accordingly, $n^v$ can be obtained by solving (6) given $(P^v, d^v)$. However, $n^v$ obtained  through this approach  does not necessarily satisfy the integer constraint in (11b). In the following Lemma~3, we show that the optimal solution to Problem (11) can be easily obtained by rounding $n^v$ to the nearest integer. In the lemma, we will use the notation of  $d^{*}_{n} = \arg \max_d s(n,d)$. Then, we have the following characterization of  the optimal solution $(n^*,d^*)$ to Problem (11).

\begin{lemma}
Given that $(n^{v},d^{v})$ is an optimal solution to Problem (12a-b)  (without considering the constraint (12c)), then the optimal solution to Problem (11) is either $(\lfloor n^{v} \rfloor,d^{*}_{\lfloor n^{v} \rfloor})$ or  $(\lceil n^{v} \rceil,d^{*}_{\lceil n^{v} \rceil})$, whichever yields the larger objective function value.\footnote{$\lfloor n\rfloor$ and $\lceil n \rceil$ denote the largest integer no greater than $n$ and the smallest integer no less than $n$.}  
\end{lemma}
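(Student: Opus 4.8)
The plan is to relax the integrality of $n$, reduce the statement to a one-dimensional unimodality property, and then round. First I would extend $s(n,d)$ to real $n\ge 1$ and introduce the value function $\Phi(n)\triangleq\max_{d\in[0,\varphi]}s(n,d)=s(n,d^{*}_{n})$, so that Problem (11) reads $\max_{n\in\mathcal{N}^{+}}\Phi(n)$ while its continuous relaxation reads $\max_{n\ge 1}\Phi(n)$. Using the one-to-one correspondence between $(P,d)$ and $(n,d)$ furnished by (6), this continuous relaxation is a reparametrization of Problem (12a)--(12b) via $P=P_{\pi_n}(d)$; since the latter is a concave program maximized at $(P^{v},d^{v})$ and $(n^{v},d^{v})$ is the corresponding point, the pair $(n^{v},d^{v})$ solves the relaxed (11) and hence $\Phi(n^{v})=\max_{n\ge 1}\Phi(n)$. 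Thus $n^{v}$ is the continuous maximizer of $\Phi$, which I would also argue is unique using strict concavity of $\hat s$.

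The crux is to prove that $\Phi$ is single-peaked (quasi-concave) on $[1,\infty)$, so that its restriction to the integers is controlled by the two integers bracketing $n^{v}$. I would attack this through the envelope theorem: since $d^{*}_{n}$ is the unique inner maximizer, $\Phi'(n)=\partial s/\partial n\,(n,d^{*}_{n})$, and it suffices to show this derivative crosses zero exactly once, from positive to negative at $n=n^{v}$. The sign of $\partial s/\partial n$ is governed by the competition between the revenue term $P_{\pi_n}(d)\big(de^{-\beta d}/\xi-dp_{e}\big)$, which increases in $n$ because admitting through more virtual servers raises $P_{\pi_n}(d)$, and the penalty term $h(\omega_{\pi_n}(d))$, which also increases in $n$ as the admitted load $\rho$ grows; establishing that their difference has the single-crossing property from the Erlang-loss form of (6) together with the convexity of $\omega_{\pi_n}(d)$ from Theorem~1 is the main obstacle. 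I would note that the tempting shortcut of transferring the joint concavity of $\hat s$ in $(P,d)$ directly to joint concavity of $s$ in $(n,d)$ does not go through cleanly, because $\hat s$ is not monotone in $P$ (it peaks at $P^{v}$), so the concave-increasing composition rule fails and the single-crossing analysis above seems unavoidable.

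Granting that $\Phi$ is single-peaked with its maximum at $n^{v}$, the conclusion follows at once. On $[1,n^{v}]$ the function is non-decreasing and on $[n^{v},\infty)$ it is non-increasing, so every positive integer $n\le\lfloor n^{v}\rfloor$ obeys $\Phi(n)\le\Phi(\lfloor n^{v}\rfloor)$ and every integer $n\ge\lceil n^{v}\rceil$ obeys $\Phi(n)\le\Phi(\lceil n^{v}\rceil)$. Because every positive integer lies in one of these two ranges, $\max_{n\in\mathcal{N}^{+}}\Phi(n)=\max\{\Phi(\lfloor n^{v}\rfloor),\Phi(\lceil n^{v}\rceil)\}$, and since $\Phi(n)=s(n,d^{*}_{n})$ this is exactly the assertion that the optimal solution of Problem (11) is $(\lfloor n^{v}\rfloor,d^{*}_{\lfloor n^{v}\rfloor})$ or $(\lceil n^{v}\rceil,d^{*}_{\lceil n^{v}\rceil})$, whichever attains the larger objective value.
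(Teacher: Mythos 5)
Your reduction is fine as far as it goes: if $\Phi(n)\triangleq\max_{d}s(n,d)=s(n,d^{*}_{n})$ is single-peaked on $[1,\infty)$ with maximizer $n^{v}$, then the floor/ceiling conclusion follows exactly as in your last paragraph. But that single-peakedness \emph{is} the entire content of the lemma, and your proposal does not prove it: you name the required step (a single sign change of $\partial s/\partial n$ along $d=d^{*}_{n}$ via the envelope theorem), call it ``the main obstacle,'' and then proceed by ``granting'' it. Carrying out that single-crossing analysis directly would mean differentiating the Erlang-loss expression (6) (an incomplete-gamma ratio) and the waiting-time formula (10), in which $\mu_{Y}$ and $\sigma_{Y}^{2}$ themselves depend on $n$ through the steady-state probabilities of Q.1 — nothing in your sketch shows this derivative crosses zero only once, so as written the argument has a hole precisely at its crux.

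The paper closes this gap with a lighter device that also sidesteps the difficulty you correctly identified. You are right that $\hat{s}$ is not monotone in $P$, so joint concavity of $\hat{s}$ in $(P,d)$ cannot be pushed through the composition $P=P_{\pi_n}(d)$ to yield concavity of $s$ in $(n,d)$; but the paper never composes. Instead, for any integer $\hat{n}<\lfloor n^{v}\rfloor$, it argues in the $(P,d)$ plane: because $P_{\pi_n}(d)$ is continuous and monotone in $n$, the curve $\left\{\left(P_{\pi_{\lfloor n^{v}\rfloor}}(d),d\right)\right\}$ separates the point $\left(P_{\pi_{\hat{n}}}(d^{*}_{\hat{n}}),d^{*}_{\hat{n}}\right)$ from the relaxed optimum $(P^{v},d^{v})$, so the line segment joining them meets that curve at some $\left(P_{\pi_{\lfloor n^{v}\rfloor}}(d_{1}),d_{1}\right)$. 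Concavity of $\hat{s}$ along a segment whose far endpoint is the global maximizer gives $\hat{s}\left(P_{\pi_{\lfloor n^{v}\rfloor}}(d_{1}),d_{1}\right)\geq \hat{s}\left(P_{\pi_{\hat{n}}}(d^{*}_{\hat{n}}),d^{*}_{\hat{n}}\right)$, hence $\Phi(\lfloor n^{v}\rfloor)\geq s(\lfloor n^{v}\rfloor,d_{1})\geq \Phi(\hat{n})$, and symmetrically $\Phi(\hat{n})\leq\Phi(\lceil n^{v}\rceil)$ for $\hat{n}>\lceil n^{v}\rceil$. That is exactly the integer single-peakedness you needed, obtained without ever differentiating in $n$; to repair your proof, replace the ``granting'' step with this segment-intersection argument (or an equivalent one).
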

\begin{proof}
First, we show that for any $\hat{n}<\lfloor n^{v} \rfloor$, $s(\hat{n},d^{*}_{\hat{n}}) \leq s(\lfloor n^{v} \rfloor,d^{*}_{\lfloor n^{v} \rfloor})$. It's equivalent to showing that for any $\hat{n}<\lfloor n^{v} \rfloor$, we can find an $(\lfloor n^{v} \rfloor,d_1)$ such that $s(\lfloor n^{v} \rfloor,d_1) \geq s(\hat{n},d^{*}_{\hat{n}})$. From (6), $P_{\pi_n(d)}$ is monotonically increasing in both $n$ and $d$. Thus, we can always find a point $(P_{\pi_{\lfloor n^{v} \rfloor}}(d_1), d_1)$ in the line segment between  $(P_{\pi_{\hat{n}}}(d_{\hat{n}}^*),d_{\hat{n}}^*)$ and $(P^v,d^v)$. The monotonicity of $P_{\pi_n}(d)$ guarantees the existence and uniqueness  of $(P_{\pi_{\lfloor n^{v} \rfloor}}(d_1),d_1)$. Due to the joint concavity of $\hat{s}$ in $(P,d)$, we have $\hat{s}(P^{v},d^{v})\geq \hat{s}(P_{\pi_{\lfloor n^{v} \rfloor}}(d_1),d_1)\geq \hat{s}(P_{\pi_{\hat{n}}}(d_{\hat{n}}^*),d_{\hat{n}}^*)$. Due to the equivalence between Problem (11) and Problem (12), we have $s(n^{v},d^{v})\geq s(\lfloor n^{v} \rfloor,d_1) \geq s(\hat{n},d_{\hat{n}}^*)$.  Likewise, we can prove that for any $\hat{n}>\lceil n^{v} \rceil$, $s(\hat{n},d^{*}_{\hat{n}}) \leq s(\lceil n^{v} \rceil,d^*_{\lceil n^{v} \rceil})$. Therefore, we can conclude that the optimal solution to Problem (11) is either $s(\lfloor n^{v} \rfloor,d^{*}_{\lfloor n^{v} \rfloor})$ or $s(\lceil n^{v} \rceil,d^{*}_{\lceil n^{v} \rceil})$.
\end{proof}

Lemma~3  indicates that we can obtain the optimal $n^*$ by rounding $n^v$. What remains is how  to calculate $d^{*}_{\lfloor n^{v} \rfloor}$ and $d^{*}_{\lceil n^{v} \rceil}$ efficiently. The following Lemma 4 indicates that  $d^{*}_{\lfloor n^{v} \rfloor}$ and $d^{*}_{\lceil n^{v} \rceil}$ can be easily obtained using single-variable convex optimization methods, e.g., the gradient search method.

\begin{lemma}
If $n$, $s(n,d)$ is concave in $d$ for $d \in \{d|\frac{1}{\xi}\left(e^{-\beta d}-d \beta e^{-\beta d}\right)-p_e \geq 0\}$.   Moreover, $s(n,d)$ is concave in d when $n=n^*$. 
\end{lemma}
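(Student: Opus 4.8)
The plan is to read the first claim as ``for every fixed $n$, $s(n,d)$ is concave in $d$ on the set $S:=\{d:g'(d)\ge 0\}$,'' where I abbreviate the per-EV profit by $g(d):=\frac{d e^{-\beta d}}{\xi}-d p_e$. Since $g'(d)=\frac{1}{\xi}(e^{-\beta d}-d\beta e^{-\beta d})-p_e$, this set $S$ is exactly the domain written in the statement, so it suffices to verify $s''(n,d)\le 0$ on $S$. I would prove this by the splitting $s(n,d)=P_{\pi_n}(d)\,g(d)-h(\omega_{\pi_n}(d))$ and showing each summand is concave on $S$; I prefer this direct second-derivative route to the $(P,d)$-reparametrization, since the product $P\,g(d)$ is not jointly concave in $(P,d)$ and only the one-variable statement is needed here.

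The penalty term is the easy half. By Theorem~1, $\omega_{\pi_n}(d)$ is increasing and convex in $d$, and $h$ is non-decreasing and convex by assumption, so the composition $h(\omega_{\pi_n}(d))$ is convex and $-h(\omega_{\pi_n}(d))$ is concave in $d$. For the revenue term $P_{\pi_n}(d)\,g(d)$ I would first record the sign pattern on $S$. In the non-trivial case $1/\xi>p_e$ the function $g'$ is strictly decreasing on $[0,2/\beta]$ (note $g''(d)=\frac{\beta(\beta d-2)}{\xi}e^{-\beta d}$) and stays negative afterwards, so $S=[0,d_0]$ with $d_0<1/\beta<2/\beta$; hence $g$ is concave on $S$, and since $g(0)=0$ with $g'\ge 0$ there, $g$ is also non-negative and increasing on $S$. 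The decisive ingredient is the shape of $P_{\pi_n}$: writing (6) as $P_{\pi_n}(d)=1-B(n,a)$ with offered load $a=\frac{\tau m\lambda}{n}d$ affine in $d$ and $B(n,a)=\frac{a^n e^{-a}}{\Gamma(n+1,a)}$ the Erlang-B loss formula, I would use the fact that $B(n,\cdot)$ is increasing and convex in the offered load, which makes $P_{\pi_n}$ decreasing and concave in $d$, i.e. $P_{\pi_n}'<0$ and $P_{\pi_n}''\le 0$. Expanding $(P_{\pi_n}g)''=P_{\pi_n}''g+2P_{\pi_n}'g'+P_{\pi_n}g''$, all three terms are then non-positive on $S$ (respectively $P''\le 0,g\ge 0$; $P'<0,g'\ge 0$; $P>0,g''<0$), so $P_{\pi_n}g$ is concave; adding the concave penalty term yields $s''(n,d)\le 0$ on $S$ for every fixed $n$.

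The main obstacle is the concavity of $P_{\pi_n}$, which reduces to the convexity of the Erlang-B loss formula in the offered load. I would either invoke the classical proof of this property or, to keep the argument self-contained, establish it by induction on the number of servers through the Erlang recursion $B(c,a)=\frac{aB(c-1,a)}{c+aB(c-1,a)}$. Everything else in the argument, namely the sign and curvature of $g$ and the composition rule for $h\circ\omega$, is elementary and should go through without difficulty.

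For the ``moreover'' part I would show that the maximizer of $s(n,\cdot)$ over $[0,\varphi]$ always lies in $S$, so that at $n=n^*$ the one-dimensional problem is a concave maximization solvable by gradient search. Indeed, for $d>d_0$ one has $g'(d)<0$; wherever $g\ge 0$ the derivative $s'(n,d)=P_{\pi_n}'g+P_{\pi_n}g'-h'(\omega)\omega'$ has all three terms non-positive (using $P'<0$, $g'<0$, $\omega'>0$), so $s$ is strictly decreasing there, while beyond the point where $g$ turns negative one has $P_{\pi_n}g<0$ and therefore $s(n,d)<-h(\omega(d_0))\le s(n,d_0)$. Hence no $d>d_0$ can be optimal, the optimizer $d^*_{n}$ lies in $S$, and in particular $d^*_{n^*}$ is the maximizer of a concave function and is computable by single-variable convex optimization.
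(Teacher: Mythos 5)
Your overall architecture coincides with the paper's own proof: the same decomposition $s(n,d)=P_{\pi_n}(d)\,g(d)-h(\omega_{\pi_n}(d))$, the same sign-pattern expansion $(P_{\pi_n}g)''=P_{\pi_n}''g+2P_{\pi_n}'g'+P_{\pi_n}g''$ with $g$ positive, increasing and concave on $S$ and $P_{\pi_n}$ positive, decreasing and (allegedly) concave, and the same use of Theorem~1 plus the composition rule for the penalty term. Your ``moreover'' argument is also the paper's contradiction argument, but done more carefully: the paper simply claims the objective is ``monotonically decreasing'' beyond $S$, which silently ignores the region where $g<0$ and hence $P_{\pi_n}'g>0$; your level comparison $s(n,d)<-h(\omega_{\pi_n}(d_0))\le s(n,d_0)$ correctly closes that hole.

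The genuine gap is in the one step you tried to make rigorous, namely the concavity of $P_{\pi_n}(d)$, which you reduce to the claim that the Erlang-B blocking $B(n,a)$ is convex in the offered load $a$. That claim is false in general. For $n=1$, $\Gamma(2,a)=(1+a)e^{-a}$ gives $B(1,a)=a/(1+a)$, which is strictly \emph{concave}, so $P_{\pi_1}(d)=1/(1+\tau m\lambda d)$ is strictly \emph{convex} in $d$; your proposed induction through the recursion $B(c,a)=aB(c-1,a)/\bigl(c+aB(c-1,a)\bigr)$ therefore fails already at the base case. For $n\ge 2$ the blocking is convex only for loads below a threshold and concave above it (e.g., $B(2,a)$ has its inflection at $a=\sqrt{3}-1$; the classical convexity results for Erlang-B, such as Harel's, are regime-dependent, not global). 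Consequently the term $P_{\pi_n}''g\le 0$ in your expansion is only available in a restricted load regime, which neither you nor the paper identifies. To be fair, the paper is no better on this point: it asserts ``it can be seen from (5) that $P_{\pi_n}(d)$ is a positive decreasing concave function'' with no proof (and in tension with its own Lemma~3, which claims $P_{\pi_n}(d)$ is increasing in $d$). Your instinct to supply a proof of this missing property was exactly right, but the classical fact you reached for is false as stated, so as written this step of your argument would fail, e.g., at $n=1$; everything else in your proposal is sound and matches, or slightly improves on, the paper's reasoning.
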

\begin{proof} 
We first prove that given any $n$, $s(n,d)$ is concave in $d$ for $d \in  \{d|\frac{1}{\xi}\left(e^{-\beta d}-d \beta e^{-\beta d}\right)-p_e \geq 0\}$.  For any $d$ such that $\frac{1}{\xi}\left(e^{-\beta d}-d \beta e^{-\beta d}\right)-p_e \geq 0$, $\frac{d e^{-\beta d}}{\xi}-d p_e$ is a positive increasing concave function in $d$. Meanwhile, it can be seen from (5) that $P_{\pi_n}(d)$  is a positive decreasing concave function in $d$. Therefore, the product $P_{\pi_n}(d)\left(\frac{d e^{-\beta d}}{\xi}-d p_e \right)$ is concave in $d$.  
According to Theorem~1, we have  $\forall n \geq m$,  $\frac{\partial^2 \omega_{\pi_n}\left(d\right)}{\partial d^2}>0$, and $\frac{\partial \omega_{\pi_n}\left(d\right)}{\partial d}>0$. This, together with the fact that $h(\omega)$ is a non-decreasing convex function, implies that $-h\left(\omega_{\pi_n}\left(d\right)\right)$ is also concave in $d$.  Hence, $s(n,d)$ is concave in $d$ for $d \in \{d|\frac{1}{\xi}\left(e^{-\beta d}-d \beta e^{-\beta d}\right)-p_e \geq 0\}$.

We now prove that $s(n, d)$ is concave in $d$ at an optimal $n^*$. This can be proved by showing that the condition $\frac{1}{\xi}\left(e^{-\beta d}-d \beta e^{-\beta d}\right)-p_e \geq 0$ is satisfied at the optimal solution, which we will show by contradiction.  Suppose that  $\frac{1}{\xi}\left(e^{-\beta d}-d \beta e^{-\beta d}\right)-p_e < 0$ holds for an optimal solution $(n^*,d^*)$. In this case, the objective in (11a) is monotonically decreasing in $d$, because the derivative of the first term in (11a) is negative in the domain and the second term in (11a) monotonically decreases with $d$. This contradicts with the assumption that $(n^*,d^*)$ is an optimal solution. Thus $\frac{1}{\xi}\left(e^{-\beta d}-d \beta e^{-\beta d}\right)-p_e \geq 0$ must hold for an optimal solution to Problem (11).   
\end{proof}

With Lemmas 3 and 4 , we propose  a 3-step optimal solution algorithm to Problem (11) in Fig.~\ref{fig:fc}.

\begin{figure}
\centering
   \includegraphics[width=1\linewidth]{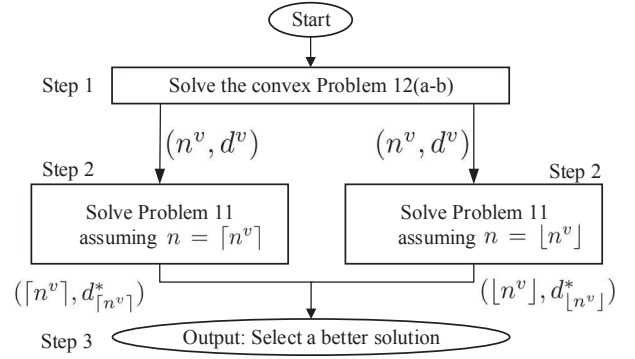}
\caption{Optimal Solution Algorithm Flowchart} \label{fig:fc}
\end{figure}


\section{Simulation Results}
In this section, we evaluate the  performance of the JoAP scheme  through simulations. We consider a 24-hour time period in all simulations. Unless specified otherwise, the charging station has $m=4$ charging ports with a charging rate $\alpha=11.5kW$. The number of parking lots is $40$. 
EVs arrive according to  a Poisson process. The parameters of arrival rate and parking time are listed in Table I according to \cite{4787536}, where the arrival rates between $4$:$01$ and $8$:$00$ (i.e., the early morning period)  are significantly lower than those of the other periods.   All EVs have the same utility parameter $\beta =0.05$ and the battery capacity $\varphi=100kWh$.\footnote{The battery specifications follow the latest information from the Tesla website: https://www.tesla.com/models.} 
For simplicity, we consider a linear waiting-time penalty $h(\omega)=c \omega$ \cite{yeo2005service}, where $c>0$ denotes the penalty rate. Our proposed JoAP algorithm is flexible enough to adapt its admission control and pricing methods to different EV arrival rates, penalty rates, and electricity prices.

\begin{center}
\captionof{table}{Simulation Parameters} \label{tab:title} 
    \begin{tabular}{| l | p{2cm}  | p{2.3cm} |}
    \hline
    Time of Day & $\lambda$ (/minutes)  & $p_e$ (\$/MWh) \\ \hline
    08:01-12:00 & 0.3  & 60 \\ \hline
    12:01-16:00 & 0.4  & 90 \\ \hline
    16:01-20:00 & 0.4  & 80 \\ \hline
    20:01-24:00 & 0.4  & 100 \\ \hline
    00:01-04:00 & 0.3  & 80 \\ \hline
    04:01-08:00 & 0.1  & 60 \\ \hline
    \end{tabular}
\end{center}

For performance comparison, we consider the following two benchmark algorithms:
\begin{enumerate}
\item Queue-length based admission (QBA):
An EV is admitted into the system only when the number of EVs already admitted is below  a threshold. For our simulations,  we set the threshold to the total number of parking lots in the charging station. Such an admission scheme has been widely used in current practice (e.g., California Plug-In Electric Vehicle Collaborative\footnote{http://www.pevcollaborative.org/workplace-charging}).  
\item Greedy admission:
An EV is admitted if  and only if doing so increases the system profit in the short-run (without considering future EV arrivals)\cite{7504120}. 
\end{enumerate}

\subsection{Average Profit Evaluation} 
  In Fig.~7, we compare the average profit per hour achieved by the three schemes under two different waiting-time penalty rates: $c=\$1/\text{min}$ and $c=\$0.4/\text{min}$. 
For each time period listed in Table I (scenarios), we simulate  1000 independent arrival processes $\mathcal{V}$ and plot the average profit performance.

We first compare the average profit of the entire day of three schemes. Fig. 7  shows that JoAP greatly outperforms the two benchmark schemes. The average profit over the whole day is 330\% and 531\% higher than that of the greedy admission scheme when the waiting-time penalty is low ($c=\$0.4/\text{min}$) and high ($c=\$1/\text{min}$), respectively. On the other hand, the widely used QBA scheme only achieves 44\% of JoAP's average profit when waiting-time penalty is low, and a negative profit when waiting-time penalty is high.

Now we investigate the performance of the three schemes in different scenarios. During low traffic period, e.g., from 4:01 to 8:00, the advantage of JoAP is not obvious. It only achieves 0.5\% and 2\% higher profit than the greedy algorithm under low and high waiting-time penalty rates, respectively. The advantage is more evident under heavy traffics, e.g., 12:01 to 24:00. Under the same traffic intensity, the advantage of JoAP over the greedy algorithm increases when $p_e$ increases. This is because the admission rate decreases rapidly when $p_e$ increases. On the other hand, the  advantage of JoAP over QBA decreases  when $p_e$ increases. This is the profit of QBA is dominated by the delay penalty, and therefore is less sensitive to the increase of electricity price $p_e$. 

It can be seen that the conventional QBA scheme performs very poorly with negative profit when the waiting-time penalty is high.  In the events of bulk arrivals, the QBA scheme admhis all EVs until there is no available parking lot and denies all the EVs that arrive later. This leads to heavy delay penalty for admitted EVs and high rejection rate for incoming EVs as well.  The greedy admission scheme has a positive but low profit  due to his inability to balance the charging schedule for the current and future EV arrivals. In fact, the greedy admission scheme always denies some EVs even under very light EV arrival traffic. In contrast,  JoAP admhis a proper number of EVs by jointly considering the EVs being served and the possible arrivals in the future, thus achieving a much higher profit than the two benchmark methods. 

\begin{figure}
\centering
   \includegraphics[width=1\linewidth]{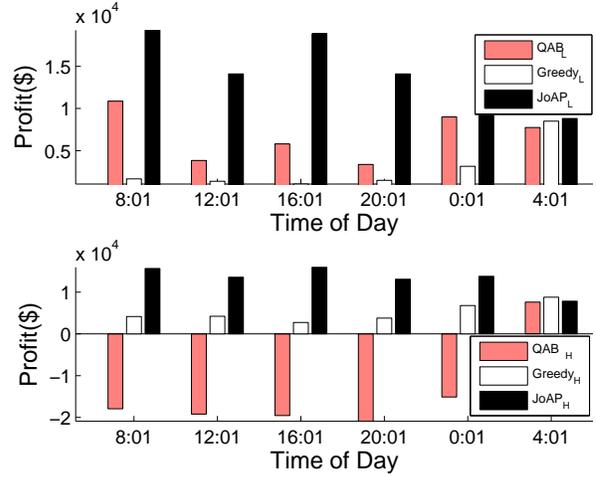}
   \label{fig:Ng2}
\caption{(a) low penalty rate ($c=\$0.4/\text{min}$) (b) high penalty rate ($c=\$1/\text{min}$)  \normalsize}
\end{figure}

\subsection{Admission Rate Evaluation}
In this subsection, we show that the average admission rate of JoAP scheme is comparable with that of the conventional QBA scheme. Fig.~8 compares the average admission rate of JoAP algorithm and the benchmarks under different penalty rates. Overall, the QBA scheme achieves the highest admission rate, i.e., 86\%, as it rejects an EV only when the parking lots are full.  However, in the some periods with moderate arrival rates, e.g., $8$:$01$ to $12$:$00$, the admission rate of the QBA scheme falls below JoAP as it is oblivious to the possible future arrivals. The overall admission rate of the greedy admission algorithm is the lowest, i.e., 70\% and 69\% in the light-penalty-rate and high-penalty rate cases, respectively.  JoAP algorithm has an admission rate 85\% and 80\% in the light-penalty-rate and high-penalty rate cases, and achieves a good balance between high admission rate  and high profit. 

\begin{figure}
\centering
   \includegraphics[width=1\linewidth]{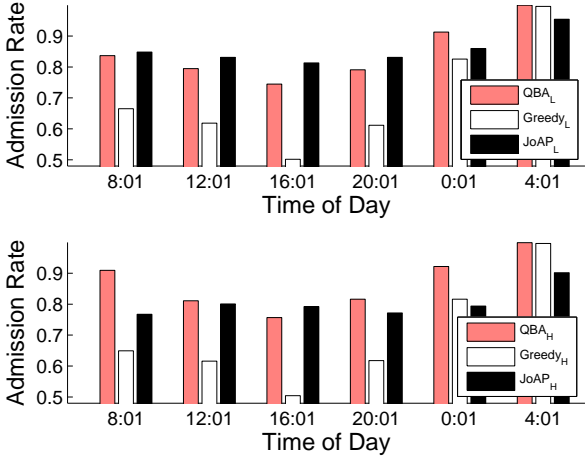}
\label{fig:wtf}
\caption{Admission rate: (a) low penalty rate (b) high penalty rate \normalsize}
\end{figure}

\subsection{Impact of $\tau$}
We have considered a fixed $\tau$ in the theoretical analysis in Section III.B. However, we have also pointed out in Section III.B that $\tau$ is can also be optimized in the JoAP admission control procedure. In Fig. 9, we numerically evaluate the performance gain if we optimize the value of $\tau$, and comparing with the case of using a fixed value of $\tau=1.01$.   For each $(c,\lambda,p_e)$, we simulate 100 independent arrival processes $\mathcal{V}$ and plot the average profit performances with  $\tau=1.01$ and the optimized  $\tau$. Averaging over all scenario, \emph{optimizing over $\tau$ increases the profit  over fixing $\tau=1.01$  only by $5.9\%$}. Therefore,  we can focus on the optimizing of $n$ (with a fixed $\tau$) in practice. 
\begin{figure}
\centering
 \includegraphics[width=\linewidth]{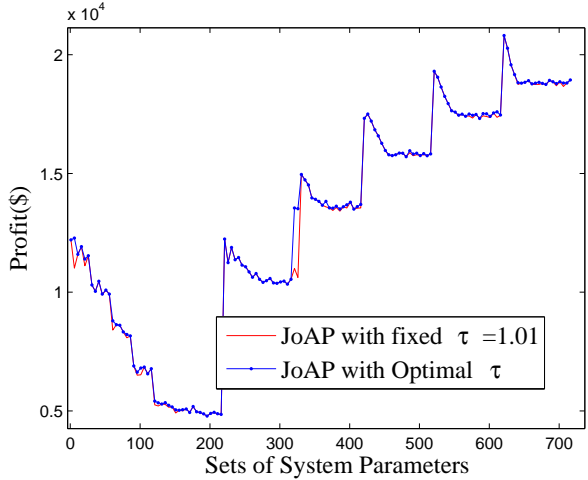}
   \caption{ Average profit with a fixed $\tau=1.01$ and the optimaized $\tau$ \normalsize}
   \label{fig:scale2}
\end{figure}
\section{Conclusions}
In this paper, we proposed a novel joint admission and pricing (JoAP) mechanism for a EV charging station to maximize his profit. In contrast to existing EV charging operation schemes, the JoAP  scheme applies a multi-sub-process admission control capable of balancing between the system admission rate and the EVs' QoS requirements according to the EV arrival rate, the electricity price, and the delay penalty. We introduced a tandem queueing model to analyze the joint admission control and scheduling process, and  proposed an efficient algorithm to compute the optimal solution.  Simulation results showed that JoAP  can effectively increase the charging station's profit while providing good QoS guarantees  to the EV users. 

In our future study, we plan to extend this work to the more general case with heterogeneous EVs. We wil further consider how  the integration of renewable and distributed energy generations  will impact the admission control and efficiency of the charging station. Thus, charging station operation under demand-sensitive electricity price due to the use of renewable energy is also an interesting future research problem.  
\appendices
\section{Proof of Lemma~1}
\begin{proof}

The mentioned two-phase-process approximation is replacing each server process with deterministic service time $d_v$ by two-phase process with exponential distribution with rate $\kappa=\frac{2}{d_v}$ Fig.~\ref{fig:scale3}. In particular, using the Laplace transform $f^{*}(s) = \frac{\kappa r_1 s + \kappa^2(r_1 + r_2 ? r_1r_2)}{ s^2 + 2\kappa s + \kappa^2(r_1 + r_2 ? r_1r_2)}$
of the density of the expiration time in the two-phase process, it is matter of simple
algebra to derive that $\kappa=\frac{2}{d_v}$, $r_1=-1$, and $r_2=\frac{5}{4}$ when the service time is deterministic and equals the constant $d_v$ \cite{tijms2008note}. 
\begin{figure}[htp]
\centering
   \includegraphics[width=0.9\linewidth]{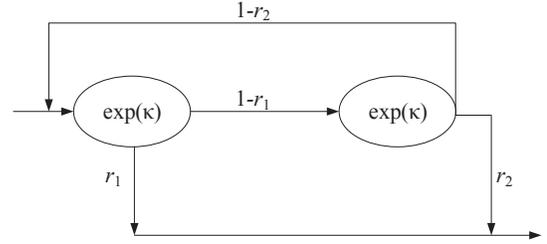}
   \caption{\small Two-phase process \normalsize}\vspace{-0.3cm}
   \label{fig:scale3}
\end{figure}

Upon this approximation, let a two-dimension pair $(s_1,s_2)$ denote the system state, where $s_1$ is the number of busy processes in phase~1, and $s_2-s_1$ is the number of busy processes in phase~2.  For particular state $(s_1,s_2)$, it can transfer to at most 6 states: $(s_1-1,s_2)$, $(s_1,s_2)$, $(s_1+1,s_2)$, $(s_1-1,s_2-1)$, $(s_1,s_2-1)$, $(s_1+1,s_2+1)$. Let $\bm{T^{k}}$ denote the generator matrix when there are total $k$ servers. Then, we calculate generator matrix $T^{k}$ in following 4 cases.
\begin{enumerate}
\item If $s_1 \geq 1$ and $s_2<k$, the system can transfer to all 6 states mentioned above. The non-zero $T^{k}$elements are,
\begin{enumerate}
\item $\bm{T^{k}}_{(s_1,s_2),(s_1-1,s_2)}= s_1 (1-r_1)\kappa$;
\item $\bm{T^{k}}_{(s_1,s_2),(s_1+1,s_2+1)}=\lambda$;
\item $\bm{T^{k}}_{(s_1,s_2),(s_1-1,s_2-1)}= s_1 r_1\kappa$;
\item $\bm{T^{k}}_{(s_1,s_2),(s_1+1,s_2)}= (s_2-s_1) (1-r_2)\kappa$;
\item $\bm{T^{k}}_{(s_1,s_2),(s_1,s_2-1)}=(s_2-s_1) r_2\kappa$;
\item $\bm{T^{k}}_{(s_1,s_2),(s_1,s_2)}=-s_2 \kappa -\lambda$.
\end{enumerate}
\item If $s_1=0$ and $s_2<k$,  the system can transfer to  $(s_1,s_2)$, $(s_1+1,s_2)$, $(s_1,s_2-1)$, $(s_1+1,s_2+1)$.
\item If $s_1 =0$ and $s_2=k$,  the system can transfer to  $(s_1,s_2)$, $(s_1+1,s_2)$, $(s_1,s_2-1)$.
\item If $s_1 =0$ and $s_2=0$, the system can transfer to  $(s_1,s_2)$, $(s_1+1,s_2+1)$.
\end{enumerate}\small

After manipulation and observation, we have,
\begin{equation}
 \mathbf{T_{k+1}}= 
\left( \begin{array}{cc}
\mathbf{T_k} & \mathbf{0}\\
\mathbf{0} & \mathbf{0}\\
\end{array}\right)+\left( \begin{array}{cc}
\mathbf{0} & \mathbf{0}\\
\mathbf{0} & \mathbf{B_{k+1}}\\
\end{array}\right)
\end{equation}\normalsize
where $ \mathbf{T_{k+1}}$ and $ \mathbf{T_{k}}$ are the generate matrix for $n=k+1$ and $n=k$, respectively. 

Let $\mathbf{x^{k+1}}=\left( \begin{array}{c}
\mathbf{y^{k+1}}\\
\mathbf{z^{k+1}}\\
\end{array}\right)$ and $\mathbf{x^{k}}$ denote the steady-state probability for $\bm{T^{K+1}}$ and $\bm{T^{K}}$, respectively. The steady-state probability of state $(s_1,s_2)$ is denoted as $x^{k}_{s_1(k-1)+s_2}$. Substitute $\mathbf{x^k}\mathbf{T_k} =0$ and $\mathbf{x^{k+1}}\mathbf{T_{k+1}} =0$ into equation (11),
we have follows,\small
\begin{subequations}
\begin{flalign}
\mathbf{y^{k+1}} &=\xi \mathbf{x^{k}}, \\
\mathbf{x_{(i,k)}^{k+1}} &=\binom{k}{i} (1-r_1)^i \mathbf{x_{(0,k)}^{k+1}}, \quad \forall i \in \{1,2,...,k\}, \\
\mathbf{x_{(i,k+1)}^{k+1}} &=\binom{k+1}{i} (1-r_1)^i \mathbf{x_{(0,k+1)}^{k+1}},\quad \forall i \in \{1,2,...,k+1\}, \\
\mathbf{x_{(0,k+1)}^{k+1}} &=\frac{d_v\lambda}{3(n+1)}\mathbf{x_{(0,k)}^{k+1}},
\end{flalign}
\end{subequations}\normalsize
where $\xi$ is a scalar. 
Let $P_k(k+1,d)=\sum_{i=0}^k \mathbf{x_{(i,k)}^{k+1}}$ denote the steady-state probability of $s_2=k$. Substitute equation (12) into $P_k(k+1,d)=\sum_{i=0}^k \mathbf{x_{(i,k)}^{k+1}}$ and $P_{k+1}(k+1,d)=\sum_{i=0}^{k+1} \mathbf{x_{(i,k+1)}^{k+1}}$. After manipulation, we get\small
\begin{equation}
P_{k+1}(k+1,d)=\frac{\tau n d \lambda }{m }P_{k}(k+1,d).
\end{equation}\normalsize
With boundary condition $\sum_{i=0}^{n} P_{i}(n,d)=1$, we get the steady-state distribution from equation (13), \small
\begin{equation}
P_i(n,d)=\frac{\frac{d^i\eta ^i}{i!}}{\sum_{j=0}^n \frac{\eta ^j}{j!}},\text{where} \quad  \eta =\frac{\tau n\lambda }{m    }.
\end{equation}\normalsize
\end{proof}
\section{Proof of Theorem~1}
\begin{proof} 
We can derive the approximated average waiting time based on Proposition~1 quoted in \cite{tijms2003first}.

\begin{proposition}
\cite{tijms2003first} For a $Ph$/$D$/$1$  queue, let $S$ and $A$ denote the service time and the inter-arrival time, respectively. The Laplace transform $a^{*}(s)=\int _0^{\infty} e^{-st}a(t)dt$ of the inter-arrival time $A$ can be written as $a^{*}(s)=\frac{a_1(s)}{a_2(s)}$, where $a(t)$ denotes the probability density function of S and $a_1(s)$ and $a_2(s)$ are two polynomials. Then, the  average waiting time can be approximated as $ \frac{\rho E(S)}{2(1-\rho)}\left[E(S^2)+E(A^2)+2E(S)\frac{a_1^{'}(0)}{a_1(0)}-2\psi\frac{a_2^{'}(0)}{a_2(0)}\right]$, where $\psi=\frac{a_2^{'}(0)-a_1^{'}(0)}{a_2(0)}$ and  $\rho$ is the load density.
\end{proposition}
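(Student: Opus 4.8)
The plan is to derive the mean waiting time of the underlying $GI/G/1$ queue (with phase-type inter-arrivals $A$ and deterministic service $S$) from a moment identity for Lindley's recursion, and to close the single term that is \emph{not} a moment of $S$ or $A$ by exploiting the rational form $a^*(s)=a_1(s)/a_2(s)$ of the arrival transform. The two logarithmic-derivative terms in the claimed bracket should then emerge naturally from differentiating the associated transform equation at the origin.

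First I would write the stationary Lindley recursion $W \stackrel{d}{=} (W+S-A)^+$, with $W$ independent of $(S,A)$ and $S\perp A$. Setting $U=S-A$ and introducing the idle increment $I=(W+U)^-\ge 0$, one has the post-jump identity $W'=W+U+I$ with the complementarity $W'I=0$. Taking expectations in steady state gives $E[I]=E[A]-E[S]=(1-\rho)E[A]$, while squaring the recursion and using $(W+U)I=-I^2$ yields the exact relation
\begin{equation}
E[W]=\frac{E[U^2]-E[I^2]}{2E[I]}=\frac{E[S^2]+E[A^2]-2E[S]E[A]-E[I^2]}{2\bigl(E[A]-E[S]\bigr)}. \nonumber
\end{equation}
This step is routine and already produces every term of the claimed bracket except the cross term and the idle-period second moment $E[I^2]$.

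The crux is evaluating $E[I^2]$, which is not determined by the first two moments of $S$ and $A$ alone. Here I would use the phase-type assumption: because $a^*(s)=a_1(s)/a_2(s)$ is rational, the Wiener--Hopf factorization of the Lindley recursion makes the waiting-time transform $\tilde W(s)=E[e^{-sW}]$ rational, with poles given by the roots of a characteristic equation built from $a_2(s)$ and the deterministic-service transform. Recasting the Lindley identity as the corresponding Pollaczek--Khinchine-type functional equation and differentiating at $s=0$ — using the normalization $\tilde W(0)=1$ and $a_1(0)=a_2(0)$ (forced by $a^*(0)=1$) — expresses $E[I^2]$ together with the cross term through the logarithmic derivatives $a_1'(0)/a_1(0)$ and $a_2'(0)/a_2(0)$. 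In particular $\psi=(a_2'(0)-a_1'(0))/a_2(0)$ equals $-(\log a^*)'(0)=E[A]$, so the asymmetry between the coefficients $E(S)$ and $\psi$ multiplying the two logarithmic derivatives is precisely what carries the idle-period correction beyond the bare cross term $-2E[S]E[A]$.

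Finally I would substitute the resulting expression for $-2E[S]E[A]-E[I^2]$ back into the moment identity and rescale by the factor $1/(2E[I])$, collecting terms into the displayed bracket $[\,E(S^2)+E(A^2)+2E(S)a_1'(0)/a_1(0)-2\psi a_2'(0)/a_2(0)\,]$ with the stated load-dependent prefactor. The hardest part throughout is justifying the phase-type/root treatment of the idle period that produces the two logarithmic-derivative terms; the moment identity and the final algebraic bookkeeping are comparatively mechanical.
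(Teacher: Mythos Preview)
The paper does not prove Proposition~1. It is quoted verbatim from \cite{tijms2003first} and used as a black box in Appendix~B to establish Theorem~1; there is no derivation in the paper to compare your argument against.

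That said, your route via Lindley's recursion is the standard one for such formulas and is largely sound as a sketch. The moment identity $E[W]=\bigl(E[U^{2}]-E[I^{2}]\bigr)/\bigl(2E[I]\bigr)$ is correct, and your identification $\psi=(a_2'(0)-a_1'(0))/a_2(0)=-(a^{*})'(0)=E[A]$ is exactly the algebraic link that makes the two logarithmic-derivative terms appear. The genuinely incomplete step is the one you yourself flag: producing $E[I^{2}]$ from the rational structure of $a^{*}$. In Tijms' treatment of $Ph/G/1$ this comes from the explicit root factorization of the characteristic equation, and one must verify that after differentiating the transform identity at $s=0$ all root-dependent quantities collapse to $a_1'(0)/a_1(0)$ and $a_2'(0)/a_2(0)$ alone; you assert this collapse rather than carry it out. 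A second point worth tightening is the prefactor: $1/(2E[I])=1/\bigl(2E[A](1-\rho)\bigr)$, whereas the displayed prefactor is $\rho E(S)/\bigl(2(1-\rho)\bigr)=E(S)^{2}/\bigl(2E[A](1-\rho)\bigr)$, so an additional factor of $E(S)^{2}$ has to be absorbed into (or extracted from) the bracket during your ``collecting terms'' step---make that bookkeeping explicit, since the statement is labeled an approximation and this is where the approximation versus exact-identity distinction should be visible.
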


We apply Proposition~1 to our tandem queue model. 
As the Lapalaze transform of $Y$ is $\mathcal{L}\{f_{Y}(x)\}=\frac{\lambda_1\lambda_2+\frac{1}{2}(\lambda_1+\lambda_2)}{(s+\lambda_1)(s+\lambda_2)}$, we have $a_1(s)=\lambda_1\lambda_2+\frac{1}{2}(\lambda_1+\lambda_2)$, $a_2(s)=(s+\lambda_1)(s+\lambda_2)$, $\psi=\frac{\frac{1}{2}\left(\lambda_1+\lambda_2\right)}{\lambda_1\lambda_2}$.
 Taking the first order derivative of $\frac{\rho }{1-\rho}$ over $d$, we have $\frac{P^{'}d+P+\frac{\lambda}{m}P^2d}{(1-\rho)^2}$, which is a positive increasing function in $d$. Substitute $\lambda_1$, $\lambda_2$ with the representation of $\mu_Y$ and $\sigma_Y$, we can express the average waiting time as $d\left[d^2+EY^2+2d\frac{\frac{1}{2}\lambda_1+\lambda_2}{\lambda_1\lambda_2}-2\frac{\frac{1}{2}\lambda_1+\lambda_2}{\lambda_1\lambda2}\frac{\lambda_1+\lambda2}{\lambda_1\lambda2}\right]=d\left[d^2+EY^2+2d \mu_Y-\mu^2_Y\right]=d\left[d^2+\sigma^2_Y+2d \mu_Y\right]$. Notice that $\mu_Y=\frac{m}{P_{\pi_n}(d)\lambda}$. Consequently, $d\left[d^2+\sigma^2_Y+2d \mu_Y\right]=d^3+2d\sigma^2_Y+\frac{md^2}{P_{\pi_n}(d)}$ is a convex increasing function in $d$ for fixed $n$.  Thus, $\omega_{\pi}(d)$ is a convex function in $d$,  which is in agreement with Kingman's formula, i.e., $w_{\pi}(d) \approx \frac{\rho d}{2(1-\rho)}\left[d^2+EY^2\right]$.

\end{proof}

\small
\bibliographystyle{ieeetr}
\bibliography{mybib}


\end{document}